\newtheorem{Thm}{Theorem}
\newtheorem{Cor}{Corollary}
\newtheorem{Claim}{Claim}
\newtheorem{Fact}{Fact}
\newenvironment{proof}{\noindent {\textbf{Proof }}}{$\Box$
\medskip}
\newcommand\E{{\mathbb{E}}}
\newcommand\mcP{\mathcal{P}}
\mathchardef\mhyphen="2D
\newcommand{\suppress}[1]{}
\newcommand\COMMENT[1]{}
\begin{document}
\title{Optimal Direct Sum Results for Deterministic and Randomized Decision Tree Complexity
\thanks{
Research supported in part by 
the French ANR QRAC project under contract ANR-08-EMER-012.
Research at the Centre
for Quantum Technologies is funded by the Singapore Ministry of Education 
and the National Research Foundation.}}
\author{Rahul Jain\thanks{Centre for Quantum Technologies and Department of Computer Science, National University of Singapore. Email: {\tt rahul@comp.nus.edu.sg}} \quad Hartmut Klauck\thanks{Centre for Quantum Technologies (NUS) and School of Physical and Mathematical Sciences, Nanyang Technological University. Email: {\tt hklauck@gmail.com}}  \quad Miklos Santha\thanks{CNRS - LRI, Universit\'e Paris-Sud, Orsay, France and Centre for Quantum Technologies, National University of Singapore. Email: {\tt santha@lri.fr}} }
\date{}
\maketitle

\begin{abstract}
A {\em Direct Sum Theorem} holds in a model of computation, when solving some $k$ input instances together is $k$ times as expensive as solving one. We show that Direct Sum Theorems hold in the models of deterministic and randomized decision trees for all relations. We also note that a near optimal Direct Sum Theorem holds for quantum decision trees for boolean functions.
\end{abstract}


\section{Introduction}

One of the goals of complexity theory is to understand the structural properties of different models of computation. A fundamental question that can be asked in every model of computation is how well different computations may be combined. Can we achieve substantial savings when solving the same problem on $k$ (independent) inputs together? Or is the straightforward approach, namely running the same algorithm $k$ times, optimal? This question is known as the {\em direct sum problem}, and has been studied in many different settings and variations.

We say that a {\em Direct Sum Theorem} holds for a measure of complexity, when solving $k$ input instances together is roughly as costly as $k$ times solving one instance according to that measure.
Since we are often interested in bounded error computations, we also need to specify how the error on $k$ input instances relates to the error on one instance. The direct sum question in a narrower sense relates to solving $k$ instances with constant error, while a {\em Strong Direct Product Theorem} holds when even using roughly $k$ times the resources required to solve one instance with constant error, the success probability goes down exponentially in $k$. This happens when we solve the $k$ instances independently, and a Strong Direct Product Theorem states that this is optimal with respect to resources and error. In this paper we only consider the direct sum problem in the narrower sense: we compare solving one instance with constant (resp.~no) error to solving $k$ instances with constant (resp.~no) error.

The decision tree model (see \cite{buhrman:dectreesurvey}) is perhaps the simplest model of computation, measuring the number of input positions that need to be accessed in order to compute a function/solve a relation. Still many questions about this model remain open. In this paper we show that the direct sum property holds for deterministic and randomized decision trees.

Previously, a Strong Direct Product Theorem for decision trees was established by Nisan et al.~\cite{nisan:dpt}. However, their result does not imply a Direct Sum Theorem in our sense, because it is only shown in a weaker setting. Instead of analyzing a single algorithm that has access to all $k$ inputs and produces all $k$ outputs, Nisan et al.~consider a setting where $k$ algorithms (that can access all inputs), each making at most $d$ queries, compute one of the $k$ outputs each, where $d$ is the query complexity of computing one instance (with bounded error). Hence this does not establish a Direct Sum Theorem in the above sense.

Previous papers \cite{nisan:dpt, ben-asher:dpt} have dismissed the direct sum problem for decision trees as either very simple, or uninteresting. To quote \cite{nisan:dpt}: ``While it is an easy exercise to see that ìdirect-sumî holds
for decision tree depth, the other two problems (direct product and help bits) are more difficult." The paper does not make it clear, what kind of decision tree is meant (the setting considered there is distributional complexity). In the distributional setting a general counterexample by Shaltiel \cite{shaltiel:sdpt} makes it clear that some very tight direct sum statements are not even true for the model where there is one decision tree that has to solve all $k$ input instances together.

Ben-Asher and Newman claim in \cite{ben-asher:dpt}: ``In the standard decision tree model the question is quite uninteresting as queries do not involve variables of more than one of the problem instances at a time."
This does not seem to be a valid assessment of the problem, because with the same argument the strong direct product question for decision trees could be dismissed, which is as of now still an open problem (in the setting where one algorithm makes all outputs).

We give proofs of Direct Sum Theorems for the case of deterministic and randomized decision trees. In the deterministic case the main problem is to construct a more efficient decision tree for one instance from a given tree for two instances. In the randomized case the proof is along the lines of some proofs of Direct Sum Theorems in communication complexity e.g.~\cite{JainRS05}.

One may ask if a similar result is true for quantum decision trees, also known as quantum query algorithms. While we do not have a proof for this model, a weaker statement can be derived from recent results by Reichardt.
In \cite{reichardt:span} he shows that the general quantum adversary bound is tight within a logarithmic factor for the quantum decision tree complexity of every boolean function (his Theorem 1.4). He also shows that the general adversary bound has a direct sum behavior (see Theorem 7.2 in the long version of the paper. Note that one has to choose a good ``connection" function $f$ like XOR because the adversary bound works only for boolean functions). The direct sum for the general adversary bound has also been shown previously in Ambainis, Childs, Le Gall and Tani~\cite{AmbainisCGT09}. Hence we can conclude that in the quantum case, at least within a logarithmic factor and for boolean functions, a Direct Sum Theorem also holds.

\section{Preliminaries} \label{prelim}
A {\em deterministic decision tree} on $m$ variables is a rooted binary tree $T$ whose internal vertices
are labeled by the boolean variables $x_1, \ldots , x_m$, and whose leaves are labeled by the output  values from a set $\cal Y$.
For every vertex $v$ in $T$, we denote by $v_0$ (respectively $v_1$) the
left son (respectively the right son) of $v$,
and by $T(v)$ the subtree of $T$ rooted at $v$.
We set $T_b = T(v_b)$, for $b\in \{0,1\}$, where $v$ is the root of $T$.
The depth $d_T(v)$ of vertex $v$ in tree $T$, is defined recursively: it is 0 if $v$ is a leaf, otherwise
$d_T(v) = \max \{d_T(v_0), d_T(v_1)\} +1$. The depth $d(T)$ of $T$ is simply the depth of its root.
Every tree naturally computes a function $f_T$ on $m$ variables,
whose value at an assignment $x = (x_1, \ldots , x_m) \in \{0,1\}^m$ is defined recursively as follows: If the root of
$T$ is a leaf, then $f_T(x)$ is the value of its label. Otherwise, if $x_i$ is the label of the root and $x_i =b$, then
$f_T(x) = f_{T_b}(x)
$.

Clearly, several decision tree compute the same function $f$. The {\em deterministic decision tree complexity}
of $f$ is the depth of the minimal depth decision tree $T$ such that $f_T = f$, and we denote it by $D(f)$.

The above definitions naturally extend to trees
whose leaves are labeled by elements of ${\cal Y}^k$, for some positive integer $k$.
We call these trees $k${\em -output} deterministic
decision trees, they compute $k$-{\em output} functions whose range is
by definition ${\cal Y}^k$. We will use the notation $f = (f^{(1)}, \ldots , f^{(k)})$ for
$k$-output functions, where
$f^{(i)}$ is the function computing the $i$th output of $f$. In particular, we are interested here in
the case when $m = kn$ and the functions
do not share common input variables. More precisely, let
$f : \{0,1\}^{kn} \rightarrow {\cal Y}^k$  be a $k$-output function
whose input variables are $x_{1,1}, \ldots , x_{1,n},  \ldots , x_{k,1}, \ldots ,  x_{k,n}$. We set ${\bar x}_i = (x_{i,1}, \ldots , x_{i,n})$, and say
that $f$ is $k$-{\em independent} if the value of
$f^{(i)}$ depends only on ${\bar x}_i$.

One can also extend the definition of deterministic decision trees and $k$-independence to relations $f\subseteq \{0,1\}^{m} \times {\cal Y}$ instead of functions in a straightforward way (decision trees are required to find an output $y$ for each input $x\in\{0,1\}^m$ so that $(x,y)\in f$). 

In particular, for a relation $f\subseteq \{0,1\}^{m} \times {\cal Y}^k$, the relation $f^{(i)}\subseteq \{0,1\}^{m} \times {\cal Y}$ consists of all $(x,y)$, such that $(x,y_1,\ldots, y_{i-1},y,y_{i+1},\ldots, y_k)\in f$ for some $y_1\ldots,y_{i-1},y_{i+1},\ldots, y_k$.

Note that for inputs $x$ for which there is no $y$ with $(x,y)\in f$ no requirement on the output is made, and hence we can assume that all relations are total without loss of generality. Since for each input only one output can be produced, each deterministic tree automatically computes a function that is consistent with the relation in question.

For a relation $f \subseteq\{0,1\}^{n} \times {\cal Y}$, we define
the $k$th {\em tensor power} of $f$ as the
relation
$f^{\otimes k} \subseteq \{0,1\}^{kn} \times{\cal Y}^k$ by
$f^{\otimes k} = \{ ((\bar x_1,\ldots,\bar x_k),(y_1,\ldots,y_k)): \forall
i: (\bar x_i,y_i)\in f\}.$ Note that
$f^{\otimes k}$ is $k$-independent.

A {\em randomized decision tree} on $m$ variables is a convex combination of deterministic decision trees, such that for each input $x$ a correct output is computed with probability $1-\epsilon$ for a given error probability $\epsilon$. If not mentioned otherwise $\epsilon=1/3$. For $k$-output relations $f$ an output $(y_1,\ldots,y_k)$ is considered erroneous, if $({\bar x}_i ,y_i)\not\in f^{(i)}$ for some $i$, i.e., all $k$ outputs are required to be correct simultaneously.

$R_\epsilon(f)$ denotes the $\epsilon$-error randomized query complexity of $f$, which is the maximum number of queries made by the best randomized decision tree with error being at most $\epsilon$ on any input. Let $\mu$ be a distribution on $\{0,1\}^n$. Let $R^\mu_\epsilon(f)$ represent the $\epsilon$-error distributional query complexity of $f$, which is the maximum number of queries made by the best randomized decision tree with average error at most $\epsilon$ under $\mu$ (note that such a tree can be assumed to be deterministic w.l.o.g., but sometimes it is simpler to give a randomized tree). We have the following fact from \cite{yao:prob}.
\begin{Fact}[Yao's Principle]
$R_\epsilon(f) = \max_\mu R^\mu_\epsilon(f) $.
\end{Fact}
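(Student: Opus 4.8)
The plan is to prove the two inequalities separately. The direction $\max_\mu R^\mu_\epsilon(f)\le R_\epsilon(f)$ is essentially immediate: fix a distribution $\mu$ and take a randomized decision tree $T$ witnessing $R_\epsilon(f)$, i.e.\ one making at most $R_\epsilon(f)$ queries on every input and erring with probability at most $\epsilon$ on every input. Since its error is at most $\epsilon$ pointwise, its average error under $\mu$ is also at most $\epsilon$, so $T$ is feasible for the distributional problem under $\mu$; as $R^\mu_\epsilon(f)$ is the minimum query number over all feasible trees, $R^\mu_\epsilon(f)\le R_\epsilon(f)$, and maximizing over $\mu$ finishes this direction. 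The reverse inequality is where the content lies, and it is an instance of von Neumann's minimax theorem (equivalently, strong LP duality) applied to a suitable finite zero-sum game.

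For the converse, set $d=\max_\mu R^\mu_\epsilon(f)$ and consider the zero-sum game in which the algorithm player chooses a deterministic decision tree of depth at most $d$, the adversary chooses an input $x\in\zo^n$, and the payoff is $1$ if the tree errs on $x$ and $0$ otherwise. A mixed strategy for the algorithm player is exactly a randomized decision tree of depth at most $d$, so the value of the game computed from its side, $\min_p \max_x \E_{T\sim p}[\mathrm{err}(T,x)]$, equals the least worst-case error attainable by a depth-$d$ randomized tree. By minimax this equals $\max_\mu \min_T \E_{x\sim\mu}[\mathrm{err}(T,x)]$. For each fixed $\mu$ we have $R^\mu_\epsilon(f)\le d$, so there is a feasible randomized tree of depth at most $d$ with average error at most $\epsilon$ under $\mu$; replacing it by the best deterministic tree in its support (whose depth is no larger and whose $\mu$-average error is no larger than the mixture's) shows $\min_T \E_{x\sim\mu}[\mathrm{err}(T,x)]\le\epsilon$ for every $\mu$. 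Hence the minimax value is at most $\epsilon$, i.e.\ some depth-$d$ randomized tree has worst-case error at most $\epsilon$, giving $R_\epsilon(f)\le d$.

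The one point needing care — and the only place I would expect a reader to pause — is the finiteness required to invoke the minimax theorem. Here the input set $\zo^n$ is finite, and for a fixed depth bound there are only finitely many combinatorially distinct deterministic trees once one observes that it suffices to let each leaf carry a single canonical valid output (for a relation, any fixed choice of a valid $y$, with an arbitrary convention when no valid output exists, since such inputs impose no constraint). Thus the game is genuinely finite and von Neumann's theorem applies directly; alternatively the whole argument can be phrased as strong LP duality between the ``distribution over trees'' LP and the ``distribution over inputs'' LP. Everything else is the bookkeeping above, so I do not anticipate a real obstacle.
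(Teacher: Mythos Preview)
The paper does not actually prove this statement: it is stated as a \emph{Fact} with a citation to Yao's original paper \cite{yao:prob} and no argument is given. So there is no in-paper proof to compare your proposal against.

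That said, your argument is the standard one and is correct. The easy direction is immediate as you say, and the hard direction is precisely the von Neumann minimax argument applied to the finite zero-sum game between depth-$d$ deterministic trees and inputs. Your observation that the distributional optimum can be taken to be deterministic (by picking the best tree in the support of a randomized witness) is exactly what is needed to connect $R^\mu_\epsilon(f)\le d$ to the inner minimum in the minimax expression. The finiteness point you raise is the only genuine technicality; your handling of it is adequate, since one may without loss of generality restrict leaf labels to a finite set of representatives (for each of the $2^n$ inputs fix one valid output, and note that replacing any leaf label by the representative that minimizes the $\mu$-error over inputs reaching that leaf never increases the error). With that, the strategy sets on both sides are finite and the minimax theorem applies directly.
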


\section{Direct Sum for Deterministic Complexity} \label{deterministic}
Let $f \subseteq \{0,1\}^{kn} \times  {\cal Y}^k$ be a $k$-output relation.
Obviously $D(f) \leq \sum_{i=1}^k D_{f^{(i)}}$ since the values $f^{(i)}$ can be evaluated sequentially.
We prove that for $k$-independent relations this is in fact the least
expensive way to evaluate $f$, 
that is the inverse inequality also holds.
\begin{Thm}[Deterministic Direct Sum] \label{theorem:deterministic}
For every $k$-independent relation $f \subseteq \{0,1\}^{kn} \times  {\cal Y}^k$, we have $D(f) \geq \sum_{i=1}^k D(f^{(i)})$. 
\end{Thm}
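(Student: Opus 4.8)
The plan is to prove the contrapositive-style statement by induction on $k$, reducing everything to the base case $k=2$: if there is a decision tree $T$ of depth $d$ computing a $2$-independent relation $f\subseteq\{0,1\}^{2n}\times\mathcal{Y}^2$ on variables $\bar x_1,\bar x_2$, then $d \geq D(f^{(1)})+D(f^{(2)})$. Given the $k=2$ case, the general case follows by grouping: write $f$ on $(\bar x_1,\ldots,\bar x_k)$ as a $2$-independent relation in the two blocks $\bar x_1$ and $(\bar x_2,\ldots,\bar x_k)$, apply the base case to get $D(f)\geq D(f^{(1)})+D(g)$ where $g$ is the $(k-1)$-output relation on the remaining blocks, and recurse. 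So the real content is $k=2$.

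For the base case I would argue as follows. Let $T$ be an optimal (minimal depth) tree for $f$, of depth $d$. The root of $T$ queries some variable, say from block $1$ (the other case is symmetric). Consider a root-to-leaf path $\pi$ in $T$ that queries the \emph{fewest} variables from block $1$ — call that number $d_1'$ — and let $\rho_1$ be the partial assignment to block-$1$ variables read along $\pi$. The idea is that along $\pi$, once we have fixed $\rho_1$, the leaf reached must output a value $y_1$ that is correct for $f^{(1)}$ for \emph{every} completion of $\rho_1$ (since, by $2$-independence, block $2$ can be set so that the same leaf is reached — wait, this needs care). The cleaner approach: fix a restriction $\rho_2$ of the block-$2$ variables that drives computation down a path minimizing the number of block-$2$ queries; restricting $T$ by $\rho_2$ yields a tree $T|_{\rho_2}$ on block-$1$ variables only, of depth $d - d_2'$ where $d_2'$ is the number of block-$2$ queries on that path, and $T|_{\rho_2}$ must correctly solve $f^{(1)}$ (because for the fixed $\rho_2$, whatever $\bar x_1$ is, the combined output must be correct on coordinate $1$, and coordinate $1$ of $f$ equals $f^{(1)}$ by independence). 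Hence $D(f^{(1)}) \leq d - d_2'$. Symmetrically one wants $D(f^{(2)}) \leq d_2'$, which would finish it; but that symmetric bound does not hold for an arbitrary choice of $\rho_2$, so the choice of which restriction to apply must be made carefully.

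The right way to close this gap — and the step I expect to be the main obstacle — is to choose the restriction \emph{adaptively / recursively} rather than all at once, proving the statement by induction on the depth $d$ of $T$. If the root of $T$ queries a block-$1$ variable $x_{1,j}$, then both subtrees $T_0,T_1$ are depth-$\leq d-1$ trees for $2$-independent relations (the relations obtained by restricting $x_{1,j}$); by induction each satisfies $d(T_b) \geq D(f^{(1)}_b) + D(f^{(2)})$ where $f^{(2)}$ is unchanged. Since $D(f^{(1)}) \leq 1 + \min_b D(f^{(1)}_b)$ and $d = 1 + \max_b d(T_b)$, we get $d \geq 1 + \min_b d(T_b) \geq 1 + \min_b D(f^{(1)}_b) + D(f^{(2)}) \geq D(f^{(1)}) + D(f^{(2)})$. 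The subtle point hidden here is that $D(f^{(2)})$ really is a lower bound on the block-$2$ cost \emph{inside each subtree} — i.e.\ that restricting a block-$1$ variable does not decrease $D(f^{(2)})$, which is immediate since $f^{(2)}$ does not depend on block-$1$ variables at all, so $f^{(2)}_b = f^{(2)}$. The base case $d=0$ (a leaf) forces $f^{(1)},f^{(2)}$ to be constant, i.e.\ $D(f^{(i)})=0$, and the claim holds trivially. This induction-on-depth argument is clean; the only thing to be careful about is correctly tracking how restricting one block's variable transforms the relation and its projections, and verifying that $f^{(1)}_b$ (the projection after restriction) is indeed the $b$-restriction of $f^{(1)}$ — which again uses $2$-independence.
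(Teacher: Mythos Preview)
Your overall strategy---reduce to $k=2$ and then induct on the depth of $T$---is sound and, once corrected, gives a valid proof. But the inductive step as written contains a genuine error. You assert $D(f^{(1)}) \leq 1 + \min_b D(f^{(1)}_b)$ and then chain
\[
d \;\geq\; 1 + \min_b d(T_b) \;\geq\; 1 + \min_b D(f^{(1)}_b) + D(f^{(2)}) \;\geq\; D(f^{(1)}) + D(f^{(2)}).
\]
The first of these claimed inequalities is false in general: take $f^{(1)}$ to be the OR of $n$ bits and restrict any input bit to $1$; the restriction is constant, so $\min_b D(f^{(1)}_b)=0$, yet $D(f^{(1)})=n$. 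The repair is simply to keep the $\max$ you already have. From the inductive hypothesis $d(T_b)\geq D(f^{(1)}_b)+D(f^{(2)})$ for \emph{each} $b$ you get
\[
d \;=\; 1+\max_b d(T_b) \;\geq\; 1+\max_b D(f^{(1)}_b)+D(f^{(2)}) \;\geq\; D(f^{(1)})+D(f^{(2)}),
\]
since $D(f^{(1)})\leq 1+\max_b D(f^{(1)}_b)$ is the true inequality (query $x_{1,j}$ first, then run an optimal tree for the restriction). A smaller technical point: both your reduction to $k=2$ and the restriction step produce blocks of unequal sizes, so the statement you induct on should allow that from the start.

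For comparison, the paper does not reduce to $k=2$. From a tree $T$ for $f$ it recursively builds, for each coordinate $i$, an explicit single-output tree $T_i$: at a node querying a group-$j$ variable, $T_j$ branches as $T$ does, while each $T_i$ with $i\neq j$ takes the \emph{shallower} of the two recursive subtrees. It then proves $\sum_i d(T_i)\leq d(T)$ by induction on depth, and separately shows each $T_i$ correctly solves $f^{(i)}$ (this correctness lemma is where the ``parsimonious'' assumption is used). Your corrected argument is essentially the same depth induction with the tree construction left implicit; it is a bit more streamlined, while the paper's version is more constructive but pays for it with the extra correctness claim.
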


\begin{proof}
Let $T$ be a $k$-output deterministic decision tree on variables
$\{x_{1,1}, \ldots , x_{k,n}\}$. For $i = 1, \ldots, k$, we refer to $\{x_{i,1}, \ldots , x_{i,n}\}$ as the $i$th group
of variables.
For every vertex $v$ of $T$, we define recursively $k$
single output decision trees $T_1(v), \ldots ,T_k(v)$,
where the vertices of $T_i(v)$ are labeled by the variables from
the $i$th group.
If $v$ is a leaf with label $(b_1, \ldots , b_k)$, then
$T_i(v)$ is a single node tree (a leaf),
with label $b_i$. Otherwise, let $v$ be an internal node and let's suppose
that its label is from the $j$th group of variables.  
The root of $T_j(v)$ is by definition
$v$ with the same label as in $T$,
its left subtree is $T_j(v_0)$ and its right subtree is $T_j(v_1)$. For all $i \neq j$, the tree
$T_i(v)$ is defined as the shallower (smaller depth) tree between $T_i(v_0)$ and $T_i(v_1)$.

\begin{Claim} \label{claim:depth}
For every vertex $v$ of $T$, we have  $\sum_{i=1}^k d(T_i(v))  \leq d_T(v)$.
\end{Claim}

\begin{proof}
The proof is by induction on the depth of $v$, and the statement is obviously true when $v$ is a leaf.
We suppose without loss of generality that the label of $v$ is from the $j$th group.
Let $b \in \{0,1\}$ such that
$d(T_j(v)) = d(T_j(v_b)) + 1$.
By definition, for all $i \neq j$, we have
$d(T_i(v)) =  \min \{ d(T_i(v_0)), d(T_i(v_1))\}$, and therefore $d(T_i(v)) \leq d(T_i(v_b))$. Thus
\begin{eqnarray*}
\sum_{i=1}^k d(T_i(v)) & \leq  &
d(T_j(v_b)) + 1 +  \sum_{i \neq j} d(T_i(v_b)) \\
& \leq & d_T(v_b) +1  \\
& \leq  & d_T(v),
\end{eqnarray*}
where the second inequality follows from the
inductive hypothesis, and the third one from the definition of the depth.
\end{proof}

We say that $T$ is {\em parsimonious} if no variable appears twice on the same root-leaf path.

\begin{Claim} \label{claim:simultaneous}
Let $T$ be parsimonious. Then for every vertex $v$ in $T$, for every $1 \leq i \leq k$, and 
for every  assignment ${\bar x}_i \in \{0,1\}^n$
for the variables in the $i$th group, there exists, for all $j \neq i$, an assignment
${\bar x}_j \in \{0,1\}^n$ for the variables in the $j$th group such that
$$f_{T_i(v)}({\bar x}_i) = f_{T(v)}^{(i)}({\bar x}_1 , \ldots , {\bar x}_i , \ldots , {\bar x}_k).$$
\end{Claim}

\begin{proof}
The proof is again by induction on the depth of $v$. Fix $1 \leq i \leq k$. If $v$ is a leaf, we can choose
for every ${\bar x}_i  \in \{0,1\}^n$ an arbitrary ${\bar x}_j \in \{0,1\}^n$, for $j \neq i$.
Otherwise, we distinguish two cases, according to the label of $v$.

Case 1: The label of $v$ is $x_{i,p}$ from the $i$th group of variables, for some $1 \leq p \leq n$.
Let ${\bar x}_i  \in \{0,1\}^n$ be an assignment for the variables in the $i$th group, and let  $x_{i,p} = b$.
By the inductive hypothesis there exists ${\bar x'}_j$, for $j \neq i$, such that
$$f_{T_i(v_b)}({\bar x}_i ) = f_{T(v_b)}^{(i)}({\bar x'}_1, \ldots  , {\bar x}_i , \ldots , {\bar x'}_k).$$
We set ${\bar x}_j  = {\bar x'}_j $, for $j \neq i$. Then we have
\begin{eqnarray*}
f_{T_i(v)}({\bar x}_i ) & = &
f_{T_i(v_b)}({\bar x}_i ) \\
& = &  f_{T(v_b)}^{(i)}({\bar x'}_1, \ldots  , {\bar x}_i , \ldots , {\bar x'}_k)\\
& = &  f_{T(v)}^{(i)}({\bar x}_1, \ldots  , {\bar x}_i , \ldots , {\bar x}_k).
\end{eqnarray*}
The first equality follows from the definition of $f_{T_i(v_b)}({\bar x}_i )$ since
$x_{i,p} = b$. The third equality also holds because by definition
$ f_{T(v)}({\bar x}_1, \ldots  , {\bar x}_i , \ldots , {\bar x}_k) =
f_{T(v_b)}({\bar x}_1, \ldots  , {\bar x}_i , \ldots , {\bar x}_k)$.

Case 2: The label of $v$ is $x_{j,p}$ from the $j$th set of variables for some $j \neq i$ and $1 \leq p \leq n$.
Let $b$ be such that
$T_i(v) =T_i(v_b)$.
Then again by the inductive hypothesis, for every ${\bar x}_i  \in \{0,1\}^n$, there exists
${\bar x'}_j$, for $j \neq i$, that satisfy
$$f_{T_i(v_b)}({\bar x}_i ) = f_{T(v_b)}^{(i)}({\bar x'}_1, \ldots  , {\bar x}_i , \ldots , {\bar x'}_k).$$
We define $x_{l,q}$ for $l \neq i$ and $q = 1, \ldots, n$ by
$$
x_{l,q} =\begin{cases}
b & \text{ if }  (l,q) = (j,p), \\
x'_{l,q} & \text{ otherwise.}
\end{cases}
$$
Then, similarly to Case 1, we have the following series of equalities:
\begin{eqnarray*}
f_{T_i(v)}({\bar x}_i ) & = &
f_{T_i(v_b)}({\bar x}_i ) \\
& = &  f_{T(v_b)}^{(i)}({\bar x'}_1, \ldots  , {\bar x}_i , \ldots , {\bar x'}_k)\\
& = &  f_{T(v)}^{(i)}({\bar x}_1, \ldots  , {\bar x}_i , \ldots , {\bar x}_k).
\end{eqnarray*}
The first equality is true because $T_i(v) =T_i(v_b)$.
The path followed on input $({\bar x}_1, \ldots  , {\bar x}_i , \ldots , {\bar x}_k)$ in $T(v)$ goes
from $v$ to $v_b$ since $x_{j,p} =b$, and then it is identical to the path followed
on input $({\bar x'}_1, \ldots  , {\bar x}_i , \ldots , {\bar x'}_k)$
in $T(v_b)$ because $T$ is parsimonious. Therefore
$f_{T(v_b)}({\bar x'}_1, \ldots  , {\bar x}_i , \ldots , {\bar x'}_k)
= f_{T(v)}({\bar x}_1, \ldots  , {\bar x}_i , \ldots , {\bar x}_k)$, and
the last equality also holds.
\end{proof}

We now prove Theorem~\ref{theorem:deterministic} by contradiction.
Let us suppose that $D(f) < \sum_{i=1}^n D(f^{(i)})$.
Let $T$ be a deterministic decision tree of depth $D(f)$ which computes a function $\tilde f$ that is consistent with the relation $f$.
Since $T$ is a minimal depth decision tree computing $\tilde f$,
we can suppose without loss of generality that $T$ is parsimonious.
Let $r$ be the root of $T$, then $d(r) = D(f)$.
For $i = 1, \ldots, k$, let $T_i = T_i(r)$.
By Claim~\ref{claim:simultaneous} and $k$-independence, $T_i$ computes an $\tilde f^{(i)}$, which is consistent with $f^{(i)}$, and
therefore $D(f^{(i)}) \leq d(T_i)$. Thus
$d(r) < \sum_{i=1}^k d(T_i)$,
contradicting Claim~\ref{claim:depth}.
\end{proof}

\begin{Cor} \label{corollary:deterministic}
For every relation $f\subseteq \{0,1\}^{n} \times {\cal Y}$ and for every integer $k$, we have
$D({f^{\otimes k} }) = k \cdot D(f)$.
\end{Cor}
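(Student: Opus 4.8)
The plan is to obtain the corollary by sandwiching $D(f^{\otimes k})$ between two bounds, both of which are already essentially in hand. First I would record the observation already made just before Theorem~\ref{theorem:deterministic}: since $f^{\otimes k}$ is a $k$-output relation, its outputs can be computed one after another, which gives $D(f^{\otimes k}) \leq \sum_{i=1}^k D\big((f^{\otimes k})^{(i)}\big)$. For the reverse direction I would invoke Theorem~\ref{theorem:deterministic} directly: because $f^{\otimes k}$ is $k$-independent (this is explicitly noted in Section~\ref{prelim} right after the definition of the tensor power), the theorem yields $D(f^{\otimes k}) \geq \sum_{i=1}^k D\big((f^{\otimes k})^{(i)}\big)$.

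The only thing that then remains is to identify the single-output relations $(f^{\otimes k})^{(i)}$. Unwinding the definition of $f^{(i)}$ for relations given in the preliminaries, $(f^{\otimes k})^{(i)}$ consists of all pairs $(x, y_i)$ such that there exist $y_1,\ldots,y_{i-1},y_{i+1},\ldots,y_k$ with $((\bar x_1,\ldots,\bar x_k),(y_1,\ldots,y_k)) \in f^{\otimes k}$, i.e.\ with $(\bar x_j, y_j) \in f$ for all $j$. Since $f$ is total (every relation may be assumed total, as noted), such completions always exist, so membership in $(f^{\otimes k})^{(i)}$ depends only on $(\bar x_i, y_i)$ and is equivalent to $(\bar x_i, y_i) \in f$. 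Hence $(f^{\otimes k})^{(i)}$ is just $f$ acting on the $i$th group of variables, and in particular $D\big((f^{\otimes k})^{(i)}\big) = D(f)$ for every $i$.

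Substituting $D\big((f^{\otimes k})^{(i)}\big) = D(f)$ into the two inequalities above gives $D(f^{\otimes k}) \leq k\cdot D(f)$ and $D(f^{\otimes k}) \geq k \cdot D(f)$, and therefore $D(f^{\otimes k}) = k\cdot D(f)$, as claimed. There is no real obstacle here: the content is entirely carried by Theorem~\ref{theorem:deterministic}, and the only mildly delicate point is the bookkeeping check that taking the $i$th component of a tensor power recovers the original relation, which is where one uses totality of $f$.
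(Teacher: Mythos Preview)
Your proposal is correct and is exactly the intended derivation: the paper states the corollary without proof, as an immediate consequence of Theorem~\ref{theorem:deterministic} together with the trivial upper bound and the observation that $f^{\otimes k}$ is $k$-independent. Your bookkeeping check that $(f^{\otimes k})^{(i)}$ reduces to $f$ on the $i$th block (using totality) is the only thing one might want to spell out, and you have done so.
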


\section{Direct Sum for Randomized Query Complexity}

\begin{Thm}[Randomized Direct Sum]
Let $f \subseteq \{0,1\}^{n} \times  {\cal Y}$  be a relation. Let $k$ be a positive integer and let $\delta > 0$ be a small constant. Then $R_\epsilon(f^{\otimes k}) \geq \delta^2 \cdot k \cdot  R_{\epsilon'}(f)$, where $\epsilon'= \frac{\epsilon}{1 - \delta} + \delta$.
\end{Thm}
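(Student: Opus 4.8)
The plan is to use Yao's principle together with a hybrid/information-theoretic argument in the style of direct sum proofs for communication complexity. By Yao's principle it suffices to fix a hard distribution $\mu$ on $\{0,1\}^n$ for the single-instance problem $f$ with $R^\mu_{\epsilon'}(f) = R_{\epsilon'}(f)$, and then feed the product distribution $\mu^{\otimes k}$ to an optimal randomized (hence w.l.o.g.\ deterministic) tree $T$ for $f^{\otimes k}$ with error $\epsilon$. Let $q = R_\epsilon(f^{\otimes k})$ be the worst-case (hence average) number of queries made by $T$ under $\mu^{\otimes k}$. We want to convert $T$ into a randomized tree for a single instance of $f$ that makes few queries in expectation and errs with probability at most $\epsilon'$ on a $\mu$-random input.

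First I would set up the embedding protocol. To solve one instance $\bar x \sim \mu$ of $f$, pick a uniformly random coordinate $i \in \{1,\dots,k\}$, plant $\bar x$ in the $i$th block, and sample the other $k-1$ blocks independently from $\mu$; run $T$ and output its $i$th answer. Since the $k$ blocks are exchangeable under $\mu^{\otimes k}$ and $T$ makes $q$ queries on average, the expected number of queries $T$ makes \emph{inside the $i$th block}, over the random choice of $i$, is at most $q/k$. By Markov's inequality, with probability at least $1-\delta$ over $i$ and the inputs, $T$ queries block $i$ at most $q/(\delta k)$ times. We abort (and declare failure, or query at most $q/(\delta k)$ positions and guess) whenever the query count in block $i$ would exceed this threshold; call this truncated tree $T'$. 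This gives a randomized tree for $f$ whose worst-case query complexity is $q/(\delta k)$.

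Next I would bound the error of $T'$ on a $\mu$-random input. There are two sources of error: the event that we aborted (probability at most $\delta$, by the choice of threshold), and the event that $T$ itself was wrong on coordinate $i$ given that we did not abort. Since $T$ has error at most $\epsilon$ on the product distribution and all $k$ output coordinates are required to be correct, the $i$th coordinate is wrong with probability at most $\epsilon$ over the full experiment; conditioning on the non-abort event, which has probability at least $1-\delta$, inflates this to at most $\epsilon/(1-\delta)$. A union bound over the two bad events gives total error at most $\frac{\epsilon}{1-\delta} + \delta = \epsilon'$. Hence $R_{\epsilon'}(f) \le R^\mu_{\epsilon'}(f) \le q/(\delta k)$, which rearranges to $q \ge \delta k\, R_{\epsilon'}(f)$. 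To obtain the stated bound with the extra factor of $\delta$ one applies Markov's inequality a second time (or, equivalently, uses threshold $q/(\delta^2 k)$ and charges one $\delta$ to the abort probability and absorbs the other into a looser accounting), at the cost of the $\delta^2$ in front; I would just carry the constants through carefully in the write-up.

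The main obstacle, and the step I would be most careful about, is the error analysis of the conditioning: one must make sure that the abort event (which depends on query counts, hence on the input) can be handled by a clean Markov argument and that conditioning on it only multiplies the error by $1/(1-\delta)$, rather than introducing correlations that blow it up. The key point making this work is that under $\mu^{\otimes k}$ the blocks are i.i.d., so averaging over the random index $i$ turns ``total queries $\le q$'' into ``expected queries in block $i$ is $\le q/k$'' with no dependence on the structure of $T$; everything else is Markov plus a union bound. A secondary subtlety is remembering that $T$ may be randomized, but by Yao's principle we may replace it by the best deterministic tree against $\mu^{\otimes k}$ (or simply keep it randomized and average over its coins too, which changes nothing in the argument).
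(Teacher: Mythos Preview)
Your approach is correct and in fact yields a \emph{stronger} bound than the one stated: your truncated tree $T'$ makes at most $q/(\delta k)$ queries and has distributional error at most $\epsilon'$ under $\mu$, so you obtain $R_\epsilon(f^{\otimes k}) \geq \delta\, k\, R_{\epsilon'}(f)$, which immediately implies the claimed $\delta^2\, k$ bound since $\delta<1$. The extra paragraph where you propose applying Markov a second time to recover the $\delta^2$ is unnecessary and should be dropped.

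The paper's proof follows a different route. It first fixes a single coordinate $j$ with small expected query count, and then uses a Markov/averaging argument to \emph{deterministically fix} the other $k-1$ inputs $x_2',\dots,x_k'$ so that both the expected error and the expected number of queries to block $j$ stay controlled; this costs the first factor of $\delta$ (in queries) and the $1/(1-\delta)$ blow-up in error. Only afterwards does it truncate, costing the second factor of $\delta$ and the additive $\delta$ in error. Your key simplification is to keep both the planted coordinate $i$ and the remaining $k-1$ blocks as internal randomness of the single-instance protocol, so that no fixing step is needed and a single truncation Markov suffices. What the paper's route buys is a single-instance protocol that is deterministic (before truncation) with good behaviour under $\mu$; what your route buys is a cleaner argument and a tighter constant.

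Two minor remarks. First, your error accounting can be sharpened: since $\Pr[T'\text{ errs}] \leq \Pr[\text{abort}] + \Pr[T\text{'s $i$th output wrong}] \leq \delta + \epsilon$, the conditioning step inflating $\epsilon$ to $\epsilon/(1-\delta)$ is not actually needed (though it is of course still valid and matches the stated $\epsilon'$). Second, ``hence w.l.o.g.\ deterministic'' is fine but should be said more carefully: you are fixing the coins of the optimal randomized tree for $f^{\otimes k}$ so that its \emph{average} error under $\mu^{\otimes k}$ stays at most $\epsilon$ while the worst-case query bound $q$ is preserved; this is exactly the paper's passage from $\mathcal P$ to $\mathcal P_1$.
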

\begin{proof}
Let $c = R_\epsilon(f^{\otimes k}) $. Let $\mcP$ be a randomized protocol for $f^{\otimes k}$ with $c$ queries and worst case error at most $\epsilon$. Let $\mu$ be a distribution on $\{0,1\}^n$. Let $\mu^{\otimes k} $ represent the distribution on $\{0,1\}^{kn}$ which consists of $k$ independent copies of $\mu$. Now let us consider the situation when we provide input to $\mcP$ distributed according to $\mu^{\otimes k}$. In such a situation we can fix the random coins of $\mcP$ in a suitable manner to get another protocol $\mcP_1$ such that $\E_{(x_1 \ldots x_k) \leftarrow \mu^{\otimes k}}[e(x_1 \ldots x_k)] \leq \epsilon$, where $e(x_1 \ldots x_k)$ represents the error made by $\mcP_1$, which is now a deterministic protocol, on input $(x_1 \ldots x_k)$ where each $x_i \in \{0,1\}^n$ represents the input for the $i$th instance of $f$. For notional convenience we use $x_i$ here instead of $\overline{x}_i$ as used in the previous section. Let $q(x_1 \ldots x_k)$ represent the number of queries made by $\mcP_1$ on input $(x_1 \ldots x_k)$. For each $1 \leq i \leq k$, 
let $q_i(x_1 \ldots x_k)$ represent the number of queries made by $\mcP_1$ on $x_i$ on input $(x_1 \ldots x_k)$. 
Since $q(x_1 \ldots x_k) = \sum_{i=1}^k q_i(x_1 \ldots x_k)$,  we have,
\begin{eqnarray*}
c  & \geq & \E_{(x_1 \ldots x_k) \leftarrow  \mu^{\otimes k}}[q(x_1 \ldots x_k)] \\
& = & \E_{(x_1 \ldots x_k) \leftarrow  \mu^{\otimes k}}[\sum_{i=1}^k q_i(x_1 \ldots x_k)] \\
& = & \sum_{i=1}^k \E_{(x_1 \ldots x_k) \leftarrow  \mu^{\otimes k}}[q_i(x_1 \ldots x_k)]
\end{eqnarray*}
Therefore there exists $1 \leq j \leq k$ such that $\E_{(x_1 \ldots x_k) \leftarrow \mu^{\otimes k}}[q_j(x_1 \ldots x_k)] \leq \frac{c}{k}$. Without loss of generality let $j=1$. Using this and the fact $\E_{(x_1 \ldots x_k) \leftarrow  \mu^{\otimes k}}[e(x_1 \ldots x_k)] \leq \epsilon$, we can argue by standard applications of Markov's inequality that there exist $x_2' \ldots x_k' \in \{0,1\}^{kn-n}$ such
that $\E_{x_1 \leftarrow \mu}[q_1(x_1 x_2' \ldots x_k')] \leq \frac{c}{\delta k}$ and $\E_{x_1 \leftarrow  \mu}[e(x_1 x_2' \ldots x_k')] \leq \frac{\epsilon}{1 - \delta}$. Therefore fixing $x_2' \ldots x_k'$ in $\mcP_1$ naturally gives rise to a protocol $\mcP_2$ for $f$ with expected number of queries under $\mu$ being at most $\frac{c}{\delta k}$ and expected error under $\mu$ being at most $\frac{\epsilon}{1- \delta}$. Now let us consider a protocol $\mcP_3$ which proceeds exactly as $\mcP_2$ but terminates whenever the number of queries exceeds $\frac{c}{\delta^2k}$. Again, using Markov's inequality, it can be argued that the expected error of $\mcP_3$ under $\mu$ is at most $\epsilon'= \frac{\epsilon}{1 - \delta} + \delta$ and of course the maximum queries made by $\mcP_3$ is at most $\frac{c}{\delta^2k}$. Hence by definition $R^\mu_{\epsilon'}(f) \leq \frac{c}{\delta^2k}$. Since this is true for every distribution $\mu$ on $\{0,1\}^n$, we get from Yao's Principle the desired result as follows.
$$R_{\epsilon'}(f) = \max_\mu R^\mu_{\epsilon'}(f) \leq \frac{c}{\delta^2k} = \frac{R_\epsilon(f^{\otimes k})}{\delta^2k} \enspace .$$
\end{proof}

\section{Open Problems} \label{open}

We proved direct sum theorems for deterministic and randomized query complexity. Note that 
it is also very easy to establish 
the direct sum property for
nondeterministic query complexity (also known as certificate complexity). However, several related open problems remain:
\begin{enumerate}
\item The direct sum theorem in the randomized case loses a factor of $\delta^2$ in the lower bound, as well as an additive $\delta$ in the error bound. While at least the factor in the lower bound is unavoidable in the setting of distributional complexity according to a result by Shaltiel \cite{shaltiel:sdpt}, this might not be necessary in the worst case complexity setting.
    \item In the quantum case no tight result is known, and the result following from Reichardt's work holds only
    for boolean functions. Can a tight result be established, even for all relations?
    \item Establishing general strong direct product theorems is open for both the quantum and the randomized/distributional setting. Note that the result of \cite{nisan:dpt} holds only in the weaker model where $k$ algorithms compute one output each.
\end{enumerate}

\bibliographystyle{alpha}

\newcommand{\etalchar}[1]{$^{#1}$}

\end{document}